\documentclass{article} % For LaTeX2e
\usepackage{natbib}
\usepackage{graphicx}
\usepackage{booktabs}
\usepackage{amsmath}
\usepackage{amsthm}
\newtheorem{theorem}{Theorem}
\usepackage{caption}
\usepackage{wrapfig}
\usepackage{amsmath,amsfonts,bm}

\def\eqref#1{equation~\ref{#1}}
\def\1{\bm{1}}

\DeclareMathAlphabet{\mathsfit}{\encodingdefault}{\sfdefault}{m}{sl}
\SetMathAlphabet{\mathsfit}{bold}{\encodingdefault}{\sfdefault}{bx}{n}

\usepackage{hyperref}
\usepackage{url}
\usepackage[T1]{fontenc}
\usepackage[letterpaper, total={5.5in, 9in}]{geometry}

\title{Beyond Discrimination: Calibrated Geoprior Fusion for Bioacoustic Monitoring}

\author{
Neha Sajja\textsuperscript{1,2},
Bart van Merri\"{e}nboer\textsuperscript{1},  \\
Burcu Karagol Ayan\textsuperscript{1},
Tom Denton\textsuperscript{1} \\  % Adds a little breathing room
\textsuperscript{1}Google DeepMind \\
\textsuperscript{2}Harvard University
\\
\texttt{neha\_sajja@mde.harvard.edu}, \\ 
\texttt{\{bartvm,burcuka,tomdenton\}@google.com}}
\begin{document}

\maketitle

\begin{abstract}
Modern bioacoustic foundation models like Perch and BirdNET can identify species with high discriminative accuracy, yet their confidence scores are often uncalibrated and difficult to interpret as probabilities of real-world occurrence. This limits their use for ecological inference beyond threshold-based detection. We leverage a global annotated acoustic dataset (WABAD) to produce calibration priors for an acoustic model, optionally incorporating species-level information. 
We introduce new methods of fusing the acoustic predictions with geopriors, which empirically improves calibration while preserving discrimination. Together, these results suggest a path to simpler and more reliable acoustic monitoring for broad biodiversity.
\end{abstract}

\section{Introduction}
\label{intro}
Bioacoustics is increasingly used at large for scale passive acoustic monitoring (PAM) for ecological insight, but easily available classifiers, such as Perch v2~\citep{vanmerrienboer2025perch} or BirdNET~\citep{kahl2021BirdNET}, do not produce calibrated outputs, despite strong classification performance~\citep{wood2024guidelines, schwinger2025uncertainty}. A well-calibrated model is desirable both for choosing precision-based thresholds~\citep{august2022emerging} or estimating call density ~\citep{navine2024thresholds} as an indicator for species abundance. Model calibration is thus a helpful property for reliable analysis from bioacoustic data.

Calibrating a model typically requires expert validation effort on data gathered within a specific project. However, these experts are rare, and the required validation effort is onerous, especially when trying to monitor a large number of species across a wide variety of sites.

Simple presence or absence of the target species has a strong impact on the interpretation of model scores: If a species is geographically unlikely, it is unlikely to appear in an audio recording. Previous work has shown that species \textit{geopriors} can improve discriminative performance of classifiers, both for images~\citep{macaodha2019presence} and acoustics~\citep{jeantet2023}. However, fusion in previous work is achieved through simple multiplication of the probabilities from the perception model and the geoprior model. Because these are both bounded probability estimates, the product compresses towards zero and can lead to under-confident predictions, even as the discrimination of relative probabilities improves.

In this work, we attempt to obtain calibrated outputs from existing acoustic models for broad biodiversity monitoring with no additional validation work. To do this, we evaluate whether calibration parameters learned on a global acoustic dataset can improve an acoustic model's calibration when transferred to unseen datasets. We test two different ways of obtaining these calibration parameters from the global acoustic dataset and evaluate the methods using using two different acoustic foundation models across two unseen datasets. We then evaluate three distinct methods for fusing these pre-calibrated values with geopriors to determine if fusion improves discrimination and calibration and which method fusion method generalizes the best across datasets, and acoustic models. 

Towards this end, we make the following contributions: (1) We demonstrate that calibration parameters learned from a large global acoustic dataset effectively transfers to new datasets; (2) We show that careful \textit{fusion} of these pre-calibrated scores with geopriors can further improve discrimination and calibration.  We also demonstrate that our findings extend across distinct datasets and multiple bioacoustic foundation models.

\section{Related Work}
\label{related_work}

\paragraph{Deep Learning in Bioacoustics}
Foundational bioacoustic models, such as Perch v2 and BirdNET, have achieved state-of-the-art accuracy in identifying thousands of global species. However the confidence scores from these models do not reflect real-world likelihoods and thus practitioners rely on often arbitrarily chosen thresholds to determine species presence, which introduces false positive and false negative biases \citep{knight2017recommendations, cole2022occupancy}.

\paragraph{Calibration of Neural Networks and the Labeling Bottleneck}
Deep audio classifiers, like other deep neural networks \citep{guo2017calibration, ye2022uncertainty} require post-hoc calibration efforts (e.g., Platt scaling) to align confidence with actual correctness \citep{ platt1999probabilistic}. In multi-label bird sound classification, foundation models exhibit highly variable calibration due to distribution shifts \citep{schwinger2025uncertainty}. 
Crucially, calibration techniques demand additional validation data. In ecology, this manual verification process is labor-intensive and outpaces the capacity of domain experts \citep{sugai2019terrestrial, barre2019accounting}. Consequently, validation creates a severe bottleneck that reduces the scalability and reliability of automated monitoring.

\paragraph{Geographic Context and Geographic Priors}
Species Distribution Models (SDMs) estimate the likelihood of species presence in an area, providing continuous estimations of habitat suitability~\citep{elith2009species}. Geofencing selects a list of likely species for a given location. Multiplicative fusion of geopriors with classifier predictions can substantially improve discrimination for camera trap images~\citep{macaodha2019presence} and PAM data~\citep{jeantet2023}.

\section{Methods}
\label{sec:methods}
\begin{figure}
  \centering
  \vspace{-3mm}
  \includegraphics[trim=5mm 5mm 5mm 5mm, clip, width=\columnwidth]{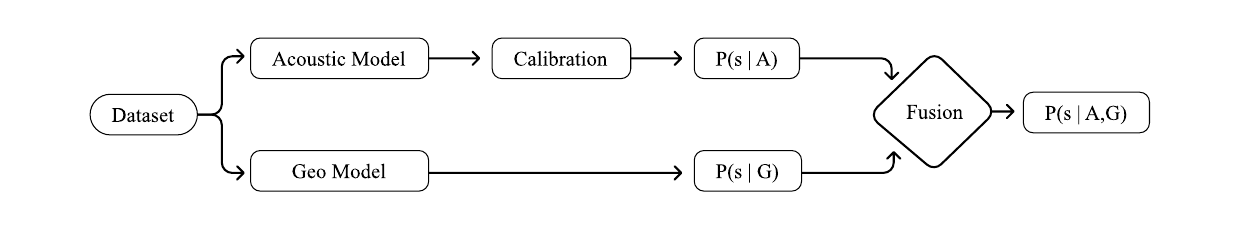}
  \vspace{-2mm}
  \caption{Methodology overview.}
  \label{fig:method_overview}
  \vspace{-3mm}
\end{figure}

\subsection{Models}
\label{sec:models}

The acoustic classification models evaluated are Perch~v2 \citep{vanmerrienboer2025perch} and BirdNET~v2.4 \citep{kahl2021BirdNET}. For geopriors, we use the BirdNET geographic meta-model. This is a temporal geoprior, accepting a latitude/longitude pair along with the week-of-year.

\subsection{Datasets and Ecological Environments}
\label{sec:datasets}

Fully annotated bioacoustic datasets are both rare and difficult to produce, but provide valuable ground-truth for multi-species calibration. Annotated datasets with precise multi-site geolocation information are even more rare. 

\textbf{Doohan A2O (A: Eastern Australia)} is a fully-annotated subset of the Australian Acoustic Observatory (A2O)~\citep{a2o}. The dataset includes 39 A2O sites, yielding 14,179 time windows across 157 species and 22,668 positive annotations. This serves as our development dataset~\footnote{We expect this dataset to be made publicly available soon.}.

\textbf{PANWAR (P: North America)} contains annotated PAM data from the northeastern United States from 104 sites, yielding 155,280 time windows across 104 species and 183,398 positive annotations. We use PANWAR as a second evaluation dataset to test the generalization of our leading methods~\citep{panwar2025largescale}.

\textbf{WABAD (Global)} contains 72 sites, yielding 100,469 time windows across 1,192 species. WABAD is used exclusively to learn pre-calibration parameters ~\citep{perezgranados2026wabad}.

\subsection{Mathematical Notation: Calibration and Fusion}
\label{sec:notation}

We compare methods for \textit{fusing} predictions from acoustic detection models with geographic priors into a single posterior probability. 
Let $z_{\mathrm{audio}}$ represent the raw logits from an acoustic model. $P(s)$ denotes the probability of presence of a target species in a specific spatiotemporal window, and $P(\bar{s})$ denotes its absence. We define $f(\cdot)$ as a chosen pre-calibration function that maps raw logits into acoustic probabilities, $P(s \mid A) \sim f(z_{\mathrm{audio}})$, where $A$ denotes the acoustic evidence. 

The geoprior model provides a corresponding probability, $P(s \mid G)$ where $G$ denotes the geographic evidence. Finally, we define $\phi(\cdot)$ as the fusion operator that combines the two probabilities into a candidate joint posterior probability,
\begin{equation}
    P_{\mathrm{fused}} = P(s|A,G)
    \simeq
    \phi\left(P(s \mid A), P(s \mid G)\right).
    \label{eq:fusion_operator}
\end{equation}

\subsection{Acoustic Models and Pre-Calibration Methods}
\label{sec:precalibration}

Acoustic models produce uncalibrated logits $z$. To map these logits to continuous probabilities, we apply the standard Platt scaling method \citep{platt1999probabilistic}, where $P(s|z) \sim \sigma(m_s z + b_s)$. The slope $m_s$ and intercept $b_s$ parameters for each species $s$ are found by logistic regression, minimizing the cross-entropy loss between the acoustic classifier's raw logits and the ground-truth labels. All Platt scaling in this paper is computed with the sklearn LogisticRegression library, with the default inverse regularization strength $C=1.0$. 
\begin{equation} P(s \mid A) \sim \sigma(m_s z + b_s) = \frac{1}{1 + \exp(-(m_s z + b_s))} \label{eq:Platt_probability} \end{equation}

\textbf{Baseline Raw Probabilities} are obtained by passing raw acoustic logits directly through a standard sigmoid $P_{\mathrm{RAW}}(s \mid A) = \sigma(z)$ and use these values as our uncalibrated probabilities. Note that Perch 2.0 was trained with softmax activation, so this sigmoid activation yields very poor calibration.

\textbf{Mean Likelihood Platt Scaling (MLPS)}:
Each species with at least one positive example in the WABAD dataset is calibrated with Platt scaling, producing a collection of 1,147 Platt parameters, $(m_s, b_s)$, illustrated in ~\autoref{fig:coef_variation}. For MLPS, we take the average slope $m_G$ and intercept $b_G$ across all 1,147 species and use these values as the global mean parameters. This gives us a single global calibration that can be transferred to a new dataset without using labels from the target deployment.

\textbf{Mean Likelihood Platt Scaling Plus (MLPS+)}
We observe significant variation in fitted calibration parameters between species (~\autoref{fig:coef_variation}). To attempt to explain this variability, we train unregularized linear regression models to predict the parameters $(m_s, b_s)$ from dataset-agnostic features: the mean-centered log of the species training count, and a measure of species acoustic similarity (\ref{app:acoustic_similarity}).

Using these species-level features, the regression predicts species-specific calibration parameters $\hat{m}_s$ and $\hat{b}_s$, which are then applied through Platt scaling. This allows the calibration function to vary between species while still requiring no labels from the target dataset.

\textbf{Oracular Platt Scaling (OPS)} provides an empirical upper bound on calibration performance by applying Platt scaling directly on the dataset's ground-truth labels. Unlike MLPS and MLPS+, OPS requires labels from the target dataset and therefore serves only as an empirical reference.

\subsection{Fusion Formulations}
\label{sec:fusion}

We establish distinct fusion algorithms $\phi(\cdot)$ for calculating the final joint probability $P(s|A,G)$. Previous work assumes categorical classification, and adopts a multiplicative fusion strategy (NBM, below). We introduce a binary joint fusion formula:

\begin{theorem}
Assuming independent audio and geographic evidence:
\begin{equation}
P(s \mid A,G)
=
\frac{
    P(s \mid A)
    P(s \mid G)
    P(\bar{s})
}{
    P(s \mid A)
    P(s \mid G)
    P(\bar{s})
    +
    P(\bar{s} \mid A)
    P(\bar{s} \mid G)
    P(s)
}.
\label{eq:foundational_bayes}
\end{equation}
\end{theorem}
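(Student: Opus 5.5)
The plan is to read the independence hypothesis as \emph{conditional} independence of the acoustic and geographic evidence given the species state: $P(A,G \mid s) = P(A\mid s)P(G\mid s)$ and $P(A,G\mid \bar s) = P(A\mid \bar s)P(G\mid \bar s)$. This is the standard naive-Bayes assumption, and it is the only reading under which the stated formula can hold. The argument then goes through the binary odds form of Bayes' rule. Throughout, $P(s)$ denotes the prior probability of presence (the base rate against which both $P(s\mid A)$ and $P(s\mid G)$ are calibrated), and $P(\bar s)=1-P(s)$.

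\textbf{Step 1.} By Bayes' rule and the law of total probability over the two states $s,\bar s$,
\begin{equation*}
P(s\mid A,G) = \frac{P(A,G\mid s)P(s)}{P(A,G\mid s)P(s) + P(A,G\mid \bar s)P(\bar s)} .
\end{equation*}
Applying the conditional independence assumption factors each joint likelihood into $P(A\mid\cdot)P(G\mid\cdot)$.

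\textbf{Step 2.} Rewrite each likelihood in terms of the single-source posteriors the paper actually has. Again by Bayes' rule,
\begin{equation*}
P(A\mid s) = \frac{P(s\mid A)P(A)}{P(s)}, \qquad P(A\mid \bar s) = \frac{P(\bar s\mid A)P(A)}{P(\bar s)},
\end{equation*}
and the same holds for $G$. Substituting into Step 1, the evidence marginals $P(A)$ and $P(G)$ appear in every term and cancel. What remains is
\begin{equation*}
\frac{P(s\mid A)P(s\mid G)/P(s)}{P(s\mid A)P(s\mid G)/P(s) + P(\bar s\mid A)P(\bar s\mid G)/P(\bar s)} .
\end{equation*}

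\textbf{Step 3.} Multiply the numerator and denominator by $P(s)P(\bar s)$. This gives exactly \eqref{eq:foundational_bayes}: the numerator becomes $P(s\mid A)P(s\mid G)P(\bar s)$, and the second denominator term becomes $P(\bar s\mid A)P(\bar s\mid G)P(s)$. As a sanity check, the same result in odds form reads $\mathrm{odds}(s\mid A,G) = \mathrm{odds}(s\mid A)\,\mathrm{odds}(s\mid G)/\mathrm{odds}(s)$. This makes clear why the prior is divided out once: it would otherwise be counted twice.

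\textbf{Main obstacle.} The algebra is routine. The real work is making the assumptions explicit so the derivation is sound:
\begin{itemize}
\item Independence must be conditional on both $s$ and $\bar s$. Marginal independence of $A$ and $G$ would not suffice.
\item We need $0<P(s)<1$ and nonzero evidence probabilities so that the divisions are legitimate.
\item Both $P(s\mid A)$ and $P(s\mid G)$ must be posteriors with respect to the same prior $P(s)$.
\end{itemize}
In the paper's setting that common-prior condition is only approximate, since $f$ is Platt-calibrated on WABAD while the geoprior has its own base rate. I would state it as part of the hypothesis, and note in passing that the formula is then the exact posterior.
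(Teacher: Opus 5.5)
Your proposal is correct and follows the paper's own derivation essentially step for step. Both arguments apply Bayes' rule with total probability over $s,\bar s$, factor the joint likelihood by conditional independence, and rewrite each single-source likelihood by Bayes' rule so that $P(A)P(G)$ cancels; you then clear the $P(s)$ and $P(\bar s)$ denominators, a final step the paper leaves implicit. Your explicit requirements — independence conditional on $\bar s$ as well as $s$, and a shared prior $P(s)$ for both posteriors — spell out what the paper uses only implicitly when it substitutes ``the corresponding expressions for both presence and absence.''
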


The full derivation of Equation~\ref{eq:foundational_bayes} is provided in Appendix~\ref{app:derivation}. From this foundational equation, we establish two fusion approaches depending on the treatment of the prior probability $P(s)$.

\textbf{Naive Bayes Scaled (NBS)}
adopts Equation~\ref{eq:foundational_bayes} with the uninformative prior $P(s) = \frac{1}{2}$. Setting $\epsilon=10^{-9}$ for numerical stability, this yields the simplified fusion formula:
\begin{equation}
    P_{\mathrm{fused}}
    =
    \frac{
        P(s \mid A)P(s \mid G)
    }{
        \max\left(
            P(s \mid A)P(s \mid G)
            +
            \left(P(\bar{s} \mid A)\right)
            \left(P(\bar{s} \mid G)\right),
            \epsilon
        \right)
    }.
    \label{eq:nbs}
\end{equation}

\textbf{Naive Bayes Scaled with Prior (NBSP)} applies Equation~\ref{eq:foundational_bayes} with a geographical species prior, obtained as the mean geoprior for the species across all sites in the dataset:
\begin{equation}
    P(s) = \frac{1}{N} \sum_{i=1}^{N} P(s \mid G_i)
    \label{eq:prior_calculation}
\end{equation}

\textbf{Naive Bayes Multiplicative Fusion (NBM)} is derived from the joint-evidence normalization of \citep{macaodha2019presence} in a categorical classification scenario, assuming an constant class prior of $1/C$. This reduces to simple multiplication of the classifier and geoprior probabilities:
\begin{equation}
    P_{\mathrm{fused}}
    =
    P(s \mid A)
    \times
    P(s \mid G).
    \label{eq:nbm}
\end{equation}

\subsection{Evaluation Protocol and Metrics}
\label{sec:metrics}

We select a set of metrics which provide useful signal for multi-label, multi-class data under extreme label imbalance, which is common in bioacoustic datasets.

\textbf{ROC-AUC and Top-1} are used for evaluating model discrimination. ROC-AUC is computed independently for each species and then macro-averaged. Top-1 accuracy is computed by checking, for each clip with at least one species present, whether the highest scoring species is present in the clip.

\textbf{Unweighted Expected Calibration Error (uECE)}: Standard ECE measures calibration quality by partitioning predictions into probability bins, and computing a per-bin calibration error as the absolute difference between accuracy and confidence. The ECE is the bin-weighted average of these calibration errors. Because each species is typically quite sparse in bioacoustic datasets, standard ECE is dominated by the lowest bin. Instead, we give each populated probability bin equal weight:
\begin{equation}
    \mathrm{uECE}
    =
    \frac{1}{|\mathcal{B}|}
    \sum_{B_m \in \mathcal{B}}
    \left|
        \operatorname{acc}(B_m)
        -
        \operatorname{conf}(B_m)
    \right|,
    \label{eq:uece}
\end{equation}

where $\mathcal{B}$ is the set of populated bins. Lower ECE values indicate better calibration.

\textbf{Balanced KL Divergence}: Global calibration metrics can be deceptive; a method may appear well-calibrated on average while exhibiting substantial underconfidence, overconfidence, or variation in reliability for individual species. To measure per-species calibration, we compute Kullback-Leibler (KL) divergence between the predicted distribution and the oracular Platt scaling. To handle the sparsity of any particular species in the dataset, we apply a class balanced weighting to the metric.

Let $y_i \in \{0, 1\}$ be the empirical ground truth for window $i$, with $N_{pos}$ and $N_{neg}$ representing the total number of true positives and true negatives in the dataset. Let $\tilde{p}_i$ and $\tilde{q}_i$ be the predicted probabilities from the OPS reference and the evaluated fusion method, respectively. To prevent numerical instability, both probabilities are strictly clipped to the range $[\epsilon, 1 - \epsilon]$, where $\epsilon = 10^{-15}$.

To ensure the metric is class-balanced, we apply a sample weight $w_i$, assigning $w_i = 1$ for positive examples and $w_i = N_{pos}/N_{neg}$ for negative examples.

The balanced KL divergence ($bKL$) is the weighted sum of the pointwise Bernoulli KL divergences:
\begin{equation}
    bKL = \frac{\sum_{i=1}^{N} w_i \left[ \tilde{p}_i \ln\left(\frac{\tilde{p}_i}{\tilde{q}_i}\right) + (1 - \tilde{p}_i) \ln\left(\frac{1 - \tilde{p}_i}{1 - \tilde{q}_i}\right) \right]}{\sum_{i=1}^{N} w_i}
    \label{eq:balanced_kl_final}
\end{equation}

This balanced metric ensures that the divergence signal of rare positive calls is preserved, with a lower final score indicating that the evaluated method's outputs remain closer to OPS calibration.

\textbf{Reliability Diagrams} visualize the uECE metric by plotting the per-bin confidence (on the x-axis) against the per-bin accuracy (on the y-axis). The diagonal indicates perfect calibration, while points off the diagonal signal under-confidence (above diagonal) or over-confidence (below diagonal).

\section{Experiments}
We organized our experiments to separately measure the effects of pre-calibration, geoprior fusion, and acoustic model choice. Across experiments, Doohan (ADS) was our development dataset and Panwar (PDS) was our final evaluation dataset. However we report both results together so that changes observed on one dataset can be compared with a second evaluation setting.

\textbf{A.1 Pre-calibration of acoustic model outputs}:
We first evaluate calibration alone. We compare RAW, MLPS, and OPS for both Perch and BirdNET on ADS and PDS. This allows us to identify which kinds of pre-calibration are most effective.

\textbf{A.2 Explaining calibration variability}: We then ask whether the variation in WABADs species-specific Platt coefficients can itself be predicted without target-dataset labels. We compare species-level feature set using held-out species linear regression and measure how much variation in the coefficient they explain. Based on this analysis MLPS+ predicts species-specific Platt parameters from transferable features including pretraining count and acoustic species similarity. We compare MLPS+ with the simpler MLPS baseline on both evaluation datasets. MLPS+ is evaluated only for Perch because the required pretraining-count information is unavailable for BirdNET. 

This experiment isolates whether calibration learned from a separate dataset (WABAD) can improve acoustic probabilities before fusion, and whether that behavior is consistent across both datasets.

\textbf{B. Geoprior fusion of calibrated predictions}:
We next compare how different pre-calibration functions $f(\cdot)$ and fusion operations $\phi(\cdot)$ interact. We compare MLPS and MLPS+ alone with each of the three fusion methods: NBS, NBSP, and NBM. We also compare fusion with raw acoustic outputs to test whether geographic information alone can compensate for uncalibrated acoustic scores. The same comparisons are evaluated on both datasets.

\textbf{C. Acoustic-model ablation}:
Finally, we test whether the behavior of the fusion methods depends on the acoustic model used to provide $P(s|a)$. We replace Perch with BirdNET and repeat the geoprior fusion comparison using MLPS pre-calibration and the same NBS, NBSP, and NBM fusion formulations. Results are again evaluated on both ADS and PDS, allowing us to separate effects that are consistent across acoustic models from ones that may change depending on the model.

\section{Results}

\begin{table}[ht]
\centering
\caption{Pre-calibration performance of Perch and BirdNET. 
Reporting Australia (A) and Panwar (P). Best scores are bold, oracle scores are italic.}
\label{tab:precalibration}
\makebox[\textwidth][c]{%
\begin{tabular}{llcccccccc}
\toprule
Audio model &
Pre-cal. &
\multicolumn{2}{c}{ROC-AUC $\uparrow$} &
\multicolumn{2}{c}{Top-1 $\uparrow$} &
\multicolumn{2}{c}{uECE $\downarrow$} &
\multicolumn{2}{c}{bKL $\downarrow$} \\
\cmidrule(lr){3-4} \cmidrule(lr){5-6} \cmidrule(lr){7-8} \cmidrule(lr){9-10}
& & A & P & A & P & A & P & A & P \\
\midrule

Perch 2.0
    & None  & 0.927 & 0.930 & 0.778 & \textbf{0.848} & 0.550 & 0.614 & 2.144 & 4.232 \\
    & MLPS  & 0.927 & 0.930 & 0.778 & \textbf{0.848} & \textbf{0.315} & \textbf{0.216} & 0.255 & \textbf{0.063} \\
    & MLPS+ & 0.927 & 0.930 & \textbf{0.785} & 0.832 & 0.327 & 0.318 & \textbf{0.223} & 0.161 \\
    & \textit{OPS}   & \textit{0.930} & \textit{0.834} & \textit{0.849} & \textit{0.851} & \textit{0.012} & \textit{0.019} & \textit{0} & \textit{0} \\

\addlinespace

BirdNET v2.4
    & None  & 0.897 & 0.853 & 0.674 & 0.678 & 0.284 & 0.169 & 0.105 & 0.047 \\
    & MLPS  & 0.897 & 0.853 & 0.674 & 0.678 & \textbf{0.225} & \textbf{0.078} & \textbf{0.074} & \textbf{0.043} \\
    & \textit{OPS}   & \textit{0.897} & \textit{0.853} & \textit{0.753} & \textit{0.767} & \textit{0.053} & \textit{0.051} & \textit{0} & \textit{0} \\

\bottomrule
\end{tabular}
} % makebox
\end{table}

\textbf{A.1:  Pre-calibration for acoustic model outputs}

On Doohan, uncalibrated Perch achieves an ROC-AUC of 0.927 and Top-1 accuracy of 0.778, but suffers from an uECE of 0.550 and a KL divergence of 2.144, worse than a random-chance baseline (bKL 0.942). MLPS global calibration preserves acoustic discrimination while drastically reducing calibration error, reducing uECE to 0.315 and KL divergence to 0.255 on Doohan, with even larger reductions on PANWAR (~\autoref{tab:precalibration}). MLPS similarly improves BirdNET calibration, demonstrating that this method generalizes across both architectures and distinct ecosystems.

Oracular Platt Scaling (OPS), fit directly on target-deployment labels, provides an empirical reference. On PANWAR, however, OPS suffers a drop in macro ROC-AUC. This instability arises because 26\% of PANWAR species have less than 20 positive annotations; unconstrained logistic fits on these rare taxa often produce negative slopes that invert discriminative ranking. In contrast, global MLPS avoids this long-tail pathology by sharing robust calibration priors across all taxa.

\textbf{A.2 Explaining calibration variability}:
Although global MLPS substantially improved calibration, we wanted to understand variation among species. Variation in both the fitted slopes and intercepts can be seen in ~\autoref{fig:coef_variation}. We found that dataset-specific features like deployment and call density offered the highest joined explanation of variance. However, the dataset-agnostic features (pre-training count, and species similarity) explained a portion of coefficient variance on WABAD without requiring  dataset-specific labels (Table ~\ref{tab:mlpsplus-features}).

\begin{figure}[t]
    \centering
    \vspace{0pt} % 2. INVISIBLE ANCHOR FOR TOP ALIGNMENT
    \centering
    \includegraphics[width=0.6\linewidth]{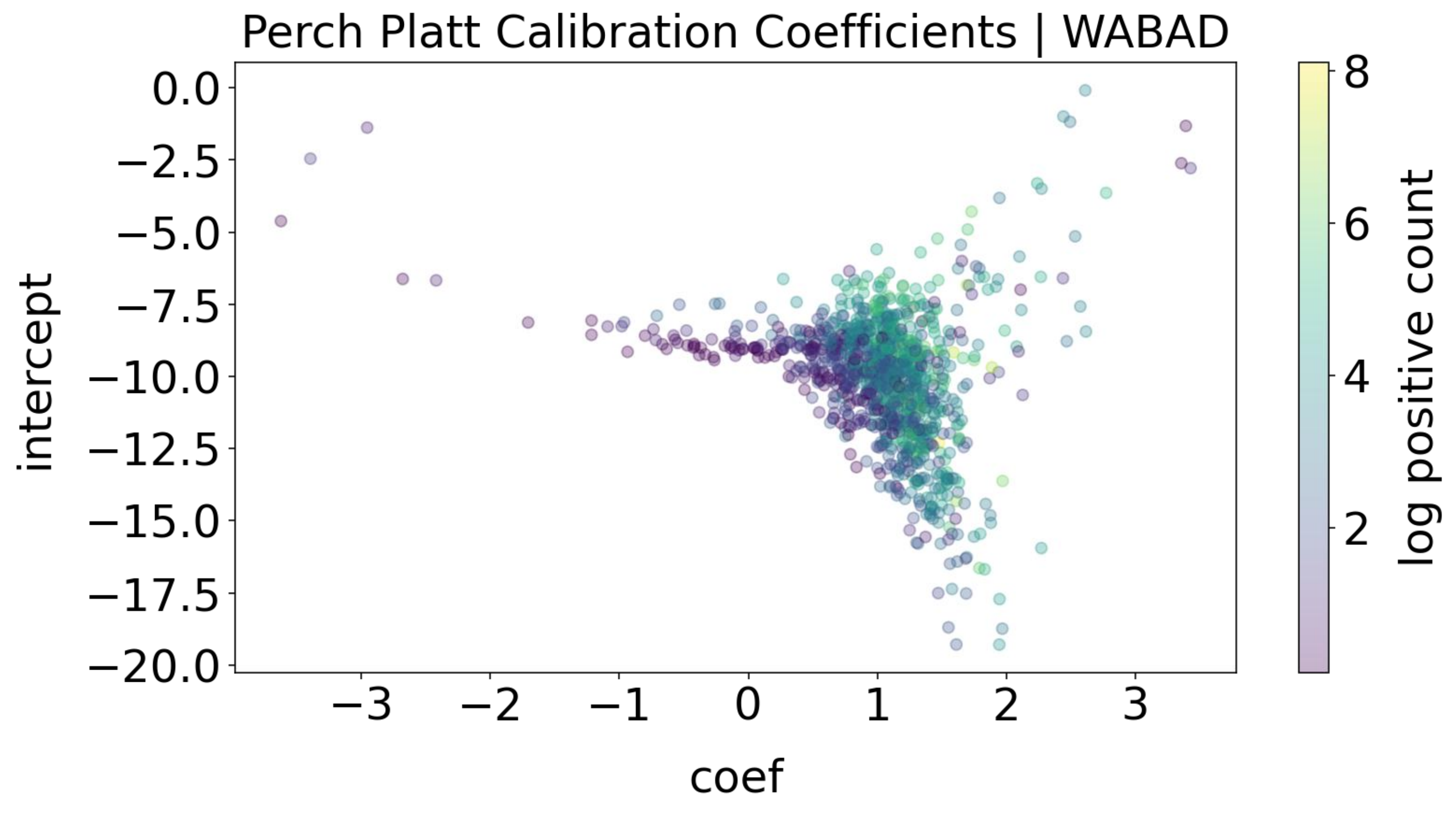}
    \caption{Variation in slope and intercept for Perch calibration on the WABAD dataset.}
    \label{fig:coef_variation}
\end{figure}

\begin{table}
    \centering
    \noindent % 1. PREVENTS MARGIN OVERFLOW
    
    % --- ROW 1: CONTENT (Tops Aligned) ---
        \small % 3. ICLR-COMPLIANT FONT SIZING
        \setlength{\tabcolsep}{3pt} % 4. REDUCES COLUMN PADDING TO FIT
        \begin{tabular}{lcc}
            \toprule
            Features & MSE $\downarrow$ & $R^2$ $\uparrow$ \\
            \midrule
            Log pretrain examples & 1.82 & 0.17 \\
            WABAD deployment & 2.03 & 0.15 \\
            WABAD call density & 2.66 & 0.09 \\
            Species similarity & 2.43 & 0.05 \\
            Taxonomic Family & 2.59 & -0.04 \\
            Taxonomic Order & 2.70 & -0.01 \\
             \hline n-pretrain, similarity & 1.80 & 0.18 \\
            n-pretrain, deployment & 1.63 & 0.23 \\
            n-pretrain, deployment, call density & 1.55 & 0.35 \\
            \bottomrule
        \end{tabular}
    
    \vspace{2mm} % Space between charts and captions
    
    \caption{Prediction of transfer-calibration coefficients from dataset and model features. Lower MSE and higher $R^2$ indicate better prediction.}
    \label{tab:mlpsplus-features}
    
\end{table}

Predicting species-specific calibration parameters provided mixed results compared to the simpler MLPS calibration. On the Australia dataset, MLPS+ preserved per-class ROC-AUC and increased top-1 accuracy. Balanced KL divergence also decreased compared to MLPS, indicating that the species-level calibrated outputs for MLPS+ were closer to the OPS reference. However uECE increased slightly. MLPS+ therefore improves some aspects of the predictions but does not dominate MLPS across evaluation criteria.

MLPS+ gains did not transfer to Panwar. MLPS achieved a uECE of 0.216 and KL divergence of 0.063, whereas MLPS+ increased these errors. Top-1 accuracy also decreased (~\autoref{tab:precalibration}).

\textbf{B: Fusion with calibrated Perch Predictions}:
The lowest uECE when fusing with uncalibrated perch scores was 0.453. Geoprior fusion alone did not compensate for poor acoustic model calibration, showing that pre-calibration is a critical step before fusion. 

Each fusion and pre-calibration methods produced tradeoffs on metrics. NBS most aggressively reduced uECE but could substantially reduce Top-1 Accuracy. NBM consistently produced the strongest discrimination but increased calibration error. For each fusion method, NBSP often resulted in best balance between low uECE scores and low KL divergence while still maintaining gains in ROC-AUC and Top 1 (~\autoref{tab:fusion-precal-paired}). It achieved the lowest balanced KL divergence among the three fusion methods in all pre-calibration settings, preserving substantially more Top-1 Accuracy than NBS and remaining considerably better calibrated than NBM. On a Reliability diagram, as seen in \autoref{fig:MLPS_Reliability}, MLPS with NBSP fusions did not always hug the exact 45$^\circ$ but did move closer to the diagonal. While on PANWAR, MLPS with NBSP hugs the 45$^\circ$ closely we see in its metrics that it saw a drop in Top-1 and an increase in balanced KL Divergence.

\begin{figure}[t]
    \centering
    \includegraphics[width=.98\linewidth]{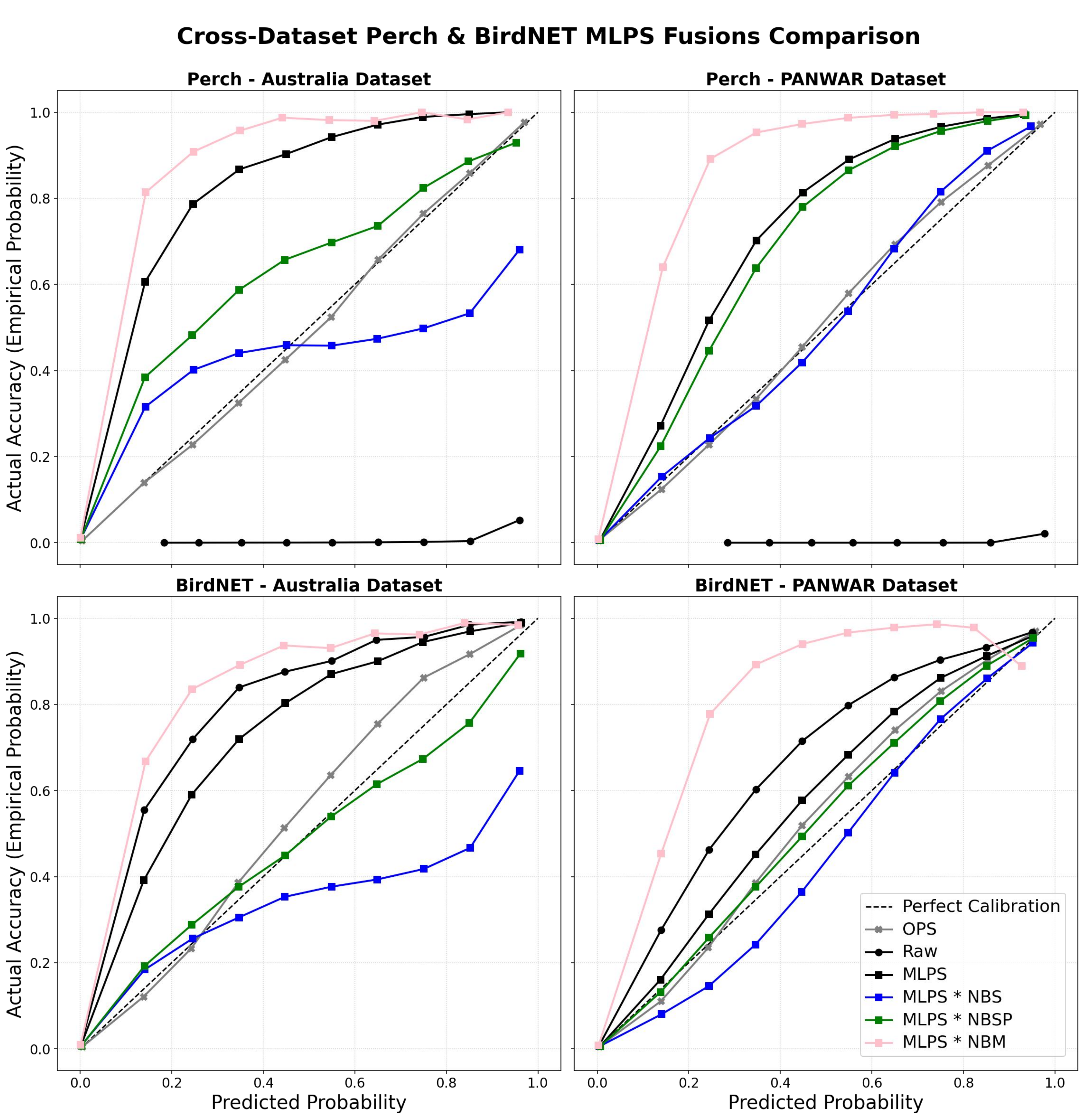}
    \caption{Cross-Dataset Reliability Diagrams on Australia (left) and PANWAR (right) for Perch (top) and BirdNET (bottom) across geoprior fusion methods.}
    \label{fig:MLPS_Reliability}
\end{figure}

\textbf{C:  Audio Model Ablation}: 
To understand the robustness of the NBSP and MLPS approach we tested this fusion and pre-calibration method with a new acoustic model as the audio source. The relative behavior of the fusion method remains consistent when BirdNET replaces Perch as the acoustic model. On Doohan, MLPS improved calibration relative to its raw outputs reducing uECE slightly and improving KL divergence from 0.105 to 0.074. Fusion reproduced the trade-offs observed with Perch: NBM produced the strongest gains in discrimination, but worsened calibration, while NBS substantially reduced Top-1 accuracy. NBSP provided the strongest calibration and discrimination balance. It reduced uECE to 0.039 and bKL to 0.064 while increasing ROC-AUC (see ~\autoref{tab:fusion-precal-paired}). These results suggest that the relative behavior of the fusion methods generalize across acoustic foundation models.

In the Reliability diagram with BirdNET as the audio model (\autoref{fig:MLPS_Reliability}), MLPS pre-calibration with NBSP fusion hugs close to the $45^\circ$ line. While NBM, which did increase discrimination, clearly suffers from systemic under confidence and raises well above the $45^\circ$ line.

\begin{table}[t!]
\centering
\caption{Fusion performance with various pre-calibration methods, reported on the Doohan~(A) and Panwar~(P) datasets.}
\label{tab:fusion-precal-paired}
    
\makebox[\textwidth][c]{%
    \begin{tabular}{llcccccccc}
    \toprule
    Audio model &
    Fusion &
    \multicolumn{2}{c}{ROC-AUC $\uparrow$} &
    \multicolumn{2}{c}{Top-1 $\uparrow$} &
    \multicolumn{2}{c}{uECE $\downarrow$} &
    \multicolumn{2}{c}{bKL $\downarrow$} \\
    \cmidrule(lr){3-4} \cmidrule(lr){5-6} \cmidrule(lr){7-8} \cmidrule(lr){9-10}
    & & A & P & A & P & A & P & A & P \\
    \midrule
    \multicolumn{10}{l}{\textbf{No pre-calibration or Fusion}} \\
    \midrule
    Perch 2.0
        & none  & 0.927 & 0.930 & 0.778 & 0.848 & 0.550 & 0.614 & 2.144 & 4.232 \\
        
    BirdNET v2.4
        & none  & 0.900 & 0.853 & 0.674 & 0.678 & 0.284 & 0.160 & 0.105 & 0.047 \\
    
    \midrule
    
    \multicolumn{10}{l}{\textbf{No pre-calibration}} \\
    \midrule
    
    Perch 2.0
        & NBS  & 0.930 & 0.933 & 0.524 & 0.760 & 0.492 & 0.498 & 1.989 & 3.348 \\
        & NBSP & 0.930 & 0.933 & 0.700 & 0.851 & 0.494 & 0.514 & 2.359 & 4.268 \\
        & NBM  & 0.815 & 0.703 & 0.187 & 0.046 & 0.453 & 0.468 & 0.439 & 0.547 \\
    
    \midrule
    \multicolumn{10}{l}{\textbf{MLPS pre-calibration}} \\
    \midrule
    
    Perch 2.0
        & NBS  & 0.930 & 0.933 & 0.510 & 0.749 & 0.155 & \textbf{0.026} & 0.277 & 0.162 \\
        & NBSP & 0.930 & 0.933 & 0.690 & 0.850 & \textbf{0.130} & \textit{0.188} & \textbf{0.195} & \textbf{0.072} \\
        & NBM  & \textbf{0.937} & \textbf{0.934} & \textit{0.769} & \textit{0.853} & 0.373 & 0.356 & 0.390 & 0.283 \\
    
    \midrule
    
    BirdNET v2.4
        & NBS  & 0.912 & 0.860 & 0.462 & 0.676 & 0.166 & 0.043 & 0.100 & 0.085 \\
        & NBSP & 0.912 & 0.860 & 0.600 & 0.675 & 0.039 & 0.032 & 0.064 & 0.054 \\
        & NBM  & 0.913 & 0.860 & 0.675 & 0.727 & 0.326 & 0.308 & 0.159 & 0.146 \\
    
    \midrule
    \multicolumn{10}{l}{\textbf{MLPS+ pre-calibration}} \\
    \midrule
    
    Perch 2.0
        & NBS  & 0.930 & 0.933 & 0.510 & 0.833 & 0.155 & 0.301 & 0.277 & 0.285 \\
        & NBSP & 0.930 & 0.933 & 0.691 & 0.837 & \textit{0.131} & 0.300 & \textit{0.195} & \textit{0.156} \\
        & NBM  & \textbf{0.937} & \textbf{0.934} & \textbf{0.778} & \textbf{0.857} & 0.373 & 0.439 & 0.360 & 0.412 \\
    
    \midrule
    \multicolumn{10}{l}{\textbf{OPS pre-calibration}} \\
    \midrule
    
    Perch 2.0
        & NBS  & 0.929 & 0.841 & 0.663 & 0.778 & 0.253 & 0.196 & 0.059 & 0.105 \\
        & NBSP & 0.929 & 0.841 & 0.801 & 0.853 & 0.184 & 0.018 & 0.051 & 0.005 \\
        & NBM  & \textit{0.935} & 0.841 & \textit{0.779} & 0.812 & 0.214 & 0.202 & 0.107 & 0.232 \\
    \midrule
    \end{tabular}
}
\end{table}

\section{Discussion}

Our results demonstrate that bioacoustic models can often be calibrated without local labels by using parameters from global annotated datasets and subsequent fusion with geopriors. 

\textbf{Pre-calibration can be transferred across datasets:}
Global calibration parameters derived from the global dataset (WABAD) improved calibration for both Perch v2 and BirdNET on both the Doohan and Panwar datasets (MLPS). However, we did not see significantly better calibration when adjusting calibration to account for the amount of pretraining data for each species (MLPS+). This suggest that species-level variation in calibration parameters is partially predictable, but predicting that variation does not generalize uniformly across datasets. The simpler global MLPS provides a more consistent balance between calibration and discrimination, likely because it corrects a systemic pattern in the score distribution of an acoustic model without overfitting.

\textbf{GeoPrior Fusion Methods and Pre-calibration:}
Our experiments show that geographic evidence alone does not correct acoustic model calibration. Pre-calibration is essential to ensure the inputs to fusion are on a compatible scale. Once calibrated, the choice of fusion method leads to a tradeoff between ranking and reliability.

NBM allowed the prior to suppress implausible species and improve ranking. However, because NBM multiplies two bounded probability estimates, it can compress predictions toward zero: Even when both models are quite confident, the product leads to under-confidence. NBSP resolves this by using the full Bayes formulation, handling each species prediction as a binary classification task. By normalizing the joint evidence against $P(s)$, NBSP scales the probabilities back up. It successfully preserves the geographic filtering benefits of NBM while improving calibration.

Relying on a single metric to evaluate this calibration is deceptive. For example, a model that ignores audio entirely and simply predicts the dataset's base rate will achieve a near-perfect uECE and a flawless reliability diagram. Failure only becomes clear when checking ROC-AUC, Top-1, and balanced KL divergence. Because of this, fusion methods must be evaluated holistically. NBSP proves to be the strongest method because it maintains classification accuracy (ROC-AUC) and matches empirical distributions (KL divergence), rather than just minimizing uECE.

\textbf{Generalization and Ecological Implications:}
Pre-calibration and careful geoprior fusion generalize across Perch and BirdNET, and to both North American and Australian datasets.

We believe that our approach generalizes well across different acoustic models because it targets fundamental behaviors. MLPS captures and corrects baseline acoustic model behavior on a global scale, regardless of which acoustic model the method is being used on. By fusing this globally corrected acoustic signal with local geographic priors, we combine pre-calibrated acoustic evidence with a specific geographic environment. The resulting better calibrated scores could enable researchers to choose meaningful precision-based thresholds with less effort or to utilize the full score distribution for tasks like estimating call density. 

\textbf{Limitations and future work:}
While transferred calibration (MLPS) significantly improves calibration, it does not perfectly recreate empirical oracle calibration (OPS) and may weaken under domain shifts. Furthermore, unconstrained OPS can become unstable for rare species, suggesting that testing monotonic constraints is a necessary future improvement. Because our experiments relied on a single geoprior model, future work should evaluate alternatives; we expect that the quality and calibration of the geoprior model is also a factor in effectiveness of fusion. While all datasets in this work contain multiple sites, evaluation on more datasets would improve the confidence in our results: We were restricted by the unavailability of bioacoustic datasets with fine-grained geographic location data. Finally, the natural next step is to test whether these fused scores directly improve downstream quantitative tasks. Prior work, like that in ~\citet{navine2024thresholds}, demonstrates that ecological metrics, such as call density, can be directly estimated from the full distribution of classifier scores; testing our calibrated outputs with these methods will determine their effectiveness.

\section{Conclusion}

If bioacoustics is to monitor global biodiversity at scale, calibration must be treated as a central objective alongside classification accuracy. We demonstrate that the bottleneck of site-specific manual labeling may not be required to achieve this. By transferring global calibration priors and fusing them with continuous geopriors via NBSP, raw model outputs can be transformed into calibrated probabilities. This framework preserves the rich, continuous information that rigid thresholds discard, clearing a path for simple and reliable broad biodiversity monitoring.

% \subsubsection*{Author Contributions}

% If you'd like to, you may include  a section for author contributions as is done
% in many journals. This is optional and at the discretion of the authors.

% \subsubsection*{Acknowledgments}
% Use unnumbered third level headings for the acknowledgments. All
% acknowledgments, including those to funding agencies, go at the end of the paper.

\bibliographystyle{iclr2027_conference}
\bibliography{references}

@article{vanmerrienboer2025perch,
  author    = {van Merri{\"e}nboer, Bart and Dumoulin, Vincent and Hamer, Jenny and Simpson, Isabelle and Burns, Andrea and Harrell, Lauren and Denton, Tom},
  title     = {Perch 2.0: Advanced Multi-taxa Bioacoustics},
  journal   = {arXiv preprint arXiv:2508.00000},
  year      = {2025}
}

@article{kahl2021birdnet,
  author    = {Kahl, Stefan and Wood, Connor M. and Eibl, Maximilian and Klinck, Holger},
  title     = {BirdNET: A deep learning solution for avian diversity monitoring},
  journal   = {Ecological Informatics},
  volume    = {61},
  pages     = {101236},
  year      = {2021},
  doi       = {10.1016/j.ecoinf.2021.101236}
}

@article{august2022emerging,
  author    = {August, Tom A. and Harvey, Martin C. and Subedar, Zareen and et al.},
  title     = {Emerging technologies revolutionise insect ecology and monitoring},
  journal   = {Philosophical Transactions of the Royal Society B: Biological Sciences},
  volume    = {377},
  number    = {1862},
  pages     = {20210089},
  year      = {2022},
  doi       = {10.1098/rstb.2021.0089}
}

@article{wood2024guidelines,
  author    = {Wood, Connor M. and Kahl, Stefan},
  title     = {Guidelines for appropriate use of BirdNET scores and other detector outputs},
  journal   = {Journal of Ornithology},
  volume    = {165},
  number    = {3},
  pages     = {777--782},
  year      = {2024},
  doi       = {10.1007/s10336-024-02144-5}
}

@article{schwinger2025uncertainty,
  author    = {Raphael Schwinger and Ben McEwen and Vincent S. Kather and René Heinrich and Lukas Rauch and Sven Tomforde},
  title     = {Uncertainty Calibration of Multi-Label Bird Sound Classifiers},
  journal   = {arXiv preprint arXiv:2511.08261},
  year      = {2025}
}

@article{jeantet2023,
title = {Improving deep learning acoustic classifiers with contextual information for wildlife monitoring},
journal = {Ecological Informatics},
volume = {77},
pages = {102256},
year = {2023},
issn = {1574-9541},
doi = {https://doi.org/10.1016/j.ecoinf.2023.102256},
url = {https://www.sciencedirect.com/science/article/pii/S1574954123002856},
author = {Lorène Jeantet and Emmanuel Dufourq}
}

@article{a2o,
author = {Roe, Paul and Eichinski, Philip and Fuller, Richard A. and McDonald, Paul G. and Schwarzkopf, Lin and Towsey, Michael and Truskinger, Anthony and Tucker, David and Watson, David M.},
title = {The Australian Acoustic Observatory},
journal = {Methods in Ecology and Evolution},
volume = {12},
number = {10},
pages = {1802-1808},
doi = {https://doi.org/10.1111/2041-210X.13660},
url = {https://besjournals.onlinelibrary.wiley.com/doi/abs/10.1111/2041-210X.13660},
eprint = {https://besjournals.onlinelibrary.wiley.com/doi/pdf/10.1111/2041-210X.13660},
year = {2021}
}

@article{knight2017recommendations,
  author    = {Knight, Elly C. and Hannah, Kevin C. and Foley, Gabriel J. and Scott, Christopher D. and Brigham, R. Mark and Bayne, Erin M.},
  title     = {Recommendations for acoustic recognizer performance assessment with application to five common automated signal recognition programs},
  journal   = {Avian Conservation and Ecology},
  volume    = {12},
  number    = {2},
  pages     = {14},
  year      = {2017},
  doi       = {10.5751/ACE-01114-120214}
}

@article{cole2022occupancy,
    author = {Cole, Jerry S and Michel, Nicole L and Emerson, Shane A and Siegel, Rodney B},
    title = {Automated bird sound classifications of long-duration recordings produce occupancy model outputs similar to manually annotated data},
    journal = {Ornithological Applications},
    volume = {124},
    number = {2},
    pages = {duac003},
    year = {2022},
    month = {05},
    issn = {0010-5422},
    doi = {10.1093/ornithapp/duac003},
    url = {https://doi.org/10.1093/ornithapp/duac003},
    eprint = {https://academic.oup.com/condor/article-pdf/124/2/duac003/43599793/duac003.pdf},
}

@inproceedings{guo2017calibration,
  author    = {Guo, Chuan and Pleiss, Geoff and Sun, Yu and Weinberger, Kilian Q.},
  title     = {On Calibration of Modern Neural Networks},
  booktitle = {Proceedings of the 34th International Conference on Machine Learning (ICML)},
  volume    = {70},
  pages     = {1321--1330},
  year      = {2017}
}

@inproceedings{ye2022uncertainty,
  author    = {Ye, Tong and Si, Shijing and Wang, Jianzong and Cheng, Ning and Xiao, Jing},
  title     = {Uncertainty Calibration for Deep Audio Classifiers},
  booktitle = {Interspeech 2022},
  pages     = {1556--1560},
  year      = {2022},
  doi       = {10.21437/Interspeech.2022-11384}
}

@article{navine2024thresholds,
  author    = {Navine, Amanda K. and Denton, Tom and Weldy, Matthew J. and Hart, Patrick J.},
  title     = {All thresholds barred: direct estimation of call density in bioacoustic data},
  journal   = {Frontiers in Bird Science},
  volume    = {3},
  pages     = {1380636},
  year      = {2024},
  doi       = {10.3389/fbirs.2024.1380636}
}

@article{sugai2019terrestrial,
  author    = {Sugai, L{\'a}ris S. M. and Silva, Thiago S. F. and Ribeiro, Jos{\'e} W. and Llusia, Diego},
  title     = {Terrestrial Passive Acoustic Monitoring: Review and Perspectives},
  journal   = {BioScience},
  volume    = {69},
  number    = {1},
  pages     = {15--25},
  year      = {2019},
  doi       = {10.1093/biosci/biy147}
}

@article{barre2019accounting,
  author    = {Barr{\'e}, K{\'e}vin and Le Viol, Isabelle and Bas, Yves},
  title     = {Accounting for automated identification errors in acoustic surveys},
  journal   = {Methods in Ecology and Evolution},
  volume    = {10},
  number    = {8},
  pages     = {1171--1188},
  year      = {2019},
  doi       = {10.1111/2041-210X.13219}
}

@article{elith2009species,
  author    = {Elith, Jane and Leathwick, John R.},
  title     = {Species Distribution Models: Ecological Explanation and Prediction Across Space and Time},
  journal   = {Annual Review of Ecology, Evolution, and Systematics},
  volume    = {40},
  number    = {1},
  pages     = {677--697},
  year      = {2009},
  doi       = {10.1146/annurev.ecolsys.110308.120159}
}

@inproceedings{macaodha2019presence,
  author    = {Mac Aodha, Oisin and Cole, Elijah and Perona, Pietro},
  title     = {Presence-Only Geographical Priors for Fine-Grained Image Classification},
  booktitle = {Proceedings of the IEEE/CVF Conference on Computer Vision and Pattern Recognition (CVPR)},
  pages     = {9596--9606},
  year      = {2019}
}

@inproceedings{platt1999probabilistic,
  author    = {Platt, John C.},
  title     = {Probabilistic Outputs for Support Vector Machines and Comparisons to Regularized Likelihood Methods},
  booktitle = {Advances in Large Margin Classifiers},
  editor    = {Smola, Alexander J. and Bartlett, Peter L. and Sch{\"o}lkopf, Bernhard and Schuurmans, Dale},
  pages     = {61--74},
  year      = {1999},
  publisher = {MIT Press}
}

@article{perezgranados2026wabad,
  title={{WABAD}: A World Annotated Bird Acoustic Dataset for Passive Acoustic Monitoring},
  author={P{\'e}rez-Granados, Cristian and Morant, Jon and Darras, Kevin F.~A. and Mar{\'i}n-G{\'o}mez, Oscar H. and Mendoza, Irene and Mu{\~n}oz-Mohedano, Miguel A. and Santamar{\'i}a-Garc{\'i}a, Eduardo and Bastianelli, Giulia and M{\'a}rquez-Rodr{\'i}guez, Alba and Budka, Micha{\l} and Panwar, Pooja and Weed, Aaron S. and Sebasti{\'a}n-Gonz{\'a}lez, Esther},
  journal={Ecology},
  volume={107},
  number={2},
  pages={e70317},
  year={2026},
  doi={10.1002/ecy.70317}
}

@article{panwar2025largescale,
  title={Large-scale, Stratified, Fully Annotated Acoustic Forest Soundscape Dataset of Avian Vocalizations from Eastern North America},
  author={Panwar, Pooja and Cummings, Wyatt J. and Martinson, Sharon Jean and Weed, Aaron S. and Symes, Laurel B. and ter Hofstede, Hannah M.},
  journal={Scientific Data},
  year={2025},
  doi={10.5281/zenodo.18041381}
}

\appendix
\section{Appendix}

\subsection{Derivation of the Joint Posterior Probability}
\label{app:derivation}

\begin{proof}
Applying Bayes' theorem and expanding the denominator using the law of total probability gives

\begin{equation}
    P(s \mid A,G)
    =
    \frac{
        P(A,G \mid s)P(s)
    }{
        P(A,G \mid s)P(s)
        +
        P(A,G \mid \bar{s})P(\bar{s})
    }.
    \label{eq:joint_posterior}
\end{equation}

To integrate the two model outputs, we assume conditional independence between the audio and geographic evidence given species presence or absence:

\begin{equation}
    P(A,G \mid s)
    =
    P(A \mid s)P(G \mid s).
    \label{eq:conditional_independence}
\end{equation}

Applying Bayes' theorem to the individual likelihoods gives

\begin{equation}
    P(A \mid s)
    =
    \frac{P(s \mid A)P(A)}{P(s)}
\end{equation}
and
\begin{equation}
    P(G \mid s)
    =
    \frac{P(s \mid G)P(G)}{P(s)}.
\end{equation}

Therefore, the joint likelihood becomes

\begin{equation}
    P(A,G \mid s)
    =
    \frac{
        P(s \mid A)P(s \mid G)P(A)P(G)
    }{
        P(s)^2
    }.
\end{equation}

Substituting the corresponding expressions for both presence and absence into Equation~\ref{eq:joint_posterior}, the marginal evidence terms $P(A)P(G)$ cancel out, resulting in the final foundational expression:

\begin{equation}
    P(s \mid A,G)
    =
    \frac{
        \dfrac{P(s \mid A)P(s \mid G)}{P(s)}
    }{
        \dfrac{P(s \mid A)P(s \mid G)}{P(s)}
        +
        \dfrac{P(\bar{s} \mid A)P(\bar{s} \mid G)}{P(\bar{s})}
    }.
\end{equation}
\end{proof}

\subsection{Species Acoustic Similarity}
\label{app:acoustic_similarity}

To measure how likely a given species is to be confused with some other species in the WABAD dataset, we calculate a species co-occurrence similarity score relative to each other species, and then take the maximum score.

To calculate the co-occurrence similarity, we use Perch v2 to extract raw acoustic logits across all temporal windows in the target deployment and binarize these predictions using a (relatively low) logit threshold of $7.0$. We then compute a pairwise Jaccard similarity matrix across all species detections to measure how frequently target species co-occur. Finally, we isolate the single highest similarity score each species shares with any other species (excluding its correlation with itself) and apply a natural log transformation.

\section{AI disclosure}
In this work, we used generative AI tools to write code used in our data processing and fusion experiments. We have not used generative AI tools for generating synthetic data, formulating theoretical models, proving mathematical claims, or interpreting results, and the rest of the required disclosure tasks are not applicable. Additionally, we used AI tools to edit the paper for readability, format LaTeX, and identify relevant literature. We have reviewed all AI-assisted work: LLM-generated code was tested for correctness by the authors, and AI-edited text was manually verified for accuracy. We take responsibility for the final content of this work, including text edited with the aid of generative AI.

\appendix

\end{document}